\newtheorem{theorem}{Theorem}%[section]
\newtheorem{lemma}{Lemma}
\newenvironment{proof}{{\sc Proof. }}{\hfill$\Box$\vspace{0.2in}}
\title{An approximation algorithm for the Bandpass-$2$ problem}
\author{Weitian Tong\thanks{Department of Computing Science, University of Alberta.
Edmonton, Alberta T6G 2E8, Canada.}$~^,$\thanks{Email: {\tt weitian@ualberta.ca}}
\and
Zhi-Zhong Chen\thanks{Division of Information System Design, Tokyo Denki University.
Hatoyama, Saitama 350-0394, Japan.
Email: {\tt zzchen@mail.dendai.ac.jp}}
\and
Lusheng Wang\thanks{Department of Computer Science, City University of Hong Kong.
Kowloon, Hong Kong, China.
Email: {\tt cswangl@cityu.edu.hk}}
\and
Yinfeng Xu\thanks{Business School, Sichuan University.
Chengdu, Sichuan 610065, China.}$~^,$\thanks{Email: {\tt yfxu@scu.edu.cn}}
\and
Jiuping Xu$^\P$$^,$\thanks{Email: {\tt xujiuping@scu.edu.cn}}
\and
Randy Goebel$^*$$^,$\thanks{Email: {\tt rgoebel@ualberta.ca}}
\and
Guohui Lin$^*$$^,$\thanks{Correspondence author.  Email: {\tt guohui@ualberta.ca}}}
\date{\today}
\begin{document}
\maketitle
\begin{abstract}
%==============================================================================
The general Bandpass-$B$ problem is NP-hard and can be approximated by a reduction into the weighted $B$-set packing problem,
with a worst case performance ratio of $O(B^2)$.
When $B = 2$, a maximum weight matching gives a $2$-approximation to the problem.
In this paper, we call the Bandpass-$2$ problem simply the Bandpass problem.
The Bandpass problem can be viewed as a variation of the maximum traveling salesman problem,
in which the edge weights are dynamic rather than given at the front.
We present a $\frac {426}{227}$-approximation algorithm for the problem.
Such an improved approximation is built on an intrinsic structural property proven for the optimal solution and
several novel schemes to partition a $b$-matching into desired matchings.

\paragraph{Keywords:}
The Bandpass problem, maximum weight $b$-matching, acyclic $2$-matching, approximation algorithm, worst case performance ratio
\end{abstract}
%==============================================================================

\section{Introduction}
%==============================================================================
In optical communication networks,
a sending point uses a binary matrix $A_{n \times m}$ to send $n$ information packages to $m$ different destination points,
in which the entry $a_{ij} = 1$ if information package $i$ is {\em not} destined for point $j$, or $a_{ij} = 0$ otherwise.
To achieve the highest cost reduction via wavelength division multiplexing technology,
an optimal packing of information flows on different wavelengths into groups is necessary~\cite{BBN09}. 
Under this binary matrix representation,
every $B$ consecutive $1$'s in a column indicates an opportunity for merging information to reduce the communication cost,
where $B$ is a pre-specified positive integer called the {\em bandpass number}.
Such a set of $B$ consecutive $1$'s in a column of the matrix is said to form a {\em bandpass}.
When counting the number of bandpasses in the matrix, no two of them in the same column are allowed to share any common rows.
The computational problem, the {\em Bandpass-$B$ problem},
is to find an optimal permutation of rows of the input matrix $A_{n \times m}$
such that the total number of extracted bandpasses in the resultant matrix is maximized~\cite{BB04,BBN09,Lin11}.
Note that though multiple bandpass numbers can be used in practice, for the sake of complexities and costs,
usually only one fixed bandpass number is considered~\cite{BBN09}.

The general Bandpass-$B$ problem, for any fixed $B \ge 2$, has been proven to be NP-hard~\cite{Lin11}.
In fact, the NP-hardness of the Bandpass-$2$ problem can be proven by a reduction from the well-known {\em Hamiltonian path} problem~\cite{GJ79},
where in the constructed binary matrix $A_{n \times m}$, a row maps to a vertex, a column maps to an edge,
and $a_{ij} = 1$ if and only if edge $e_j$ is incident to vertex $v_i$.
It follows that there is a row permutation achieving $n-1$ bandpasses if and only if there is a Hamiltonian path in the graph.

On the approximability, the Bandpass-$B$ problem has a close connection to the weighted $B$-set packing problem~\cite{GJ79}.
Given an instance $I$ of a maximization problem $\Pi$,
let $C^*(I)$ ($C(I)$, respectively) denote the value of the optimal solution (the value of the solution produced by an algorithm, respectively).
The performance ratio of the algorithm on $I$ is $\frac {C^*(I)}{C(I)}$.
The algorithm is a $\rho$-approximation if $\sup_I \frac {C^*(I)}{C(I)} \le \rho$.
By taking advantages of the approximation algorithms designed for the weighted $B$-set packing problem~\cite{AH98,CH99},
the Bandpass-$B$ problem can be approximated within $O(B^2)$~\cite{Lin11}.
Moreover, since the maximum weight matching problem is solvable in cubic time,
the Bandpass-$2$ problem admits a simple maximum weight matching based $2$-approximation algorithm~\cite{Lin11}.
In the sequel, we call the Bandpass-$2$ problem simply the Bandpass problem.

In a preliminary version of this paper~\cite{TGD12},
Tong {\it et al.} presented a first improved approximation algorithm for the Bandpass problem using two maximum weight matchings.
Their algorithm has a worst case performance ratio of $\frac {36}{19} \approx 1.8948$.
In more details, their algorithm computes a maximum weight matching in the edge-weighted graph constructed from the input matrix,
and then computes another maximum weight matching in the residual graph;
Through breaking cycles in the union of these two matchings, a collection of paths are achieved and concatenated into a Hamiltonian path.
When estimating the weight of this Hamiltonian path, the authors present an intrinsic structural property for the optimal row permutation,
and show that the weight of the second maximum weight matching is greater than or equal to a portion of certain bandpasses in the optima.
These special bandpasses in the optima are characterized using the first maximum weight matching.

Instead of breaking cycles in the union of two matchings,
Chen and Wang~\cite{CW12} presented an alternative to compute a second matching such that the union of the two matchings is guaranteed acyclic.
The keys to this success are two lemmas that partition a $2$-matching (an acyclic $2$-matching, respectively)
into $4$ ($3$, respectively) candidate matchings.
Still based on the structural property characterized in~\cite{TGD12},
this alternative algorithm achieves a better performance ratio of $\frac {220}{117} \approx 1.8805$.

In this paper, we advance the novel $2$-matching partitioning scheme to show that two edge disjoint $2$-matchings
can be partitioned into $7.5$ desired matchings such that the union of each of them and the very first maximum weight matching is guaranteed acyclic.
Including here all the detailed proofs missed from the two preliminary versions~\cite{TGD12,CW12},
we show that our new approximation algorithm has a performance ratio of $\frac {426}{227} \approx 1.8767$.

\section{The approximation algorithm}
%==============================================================================
A reduction from the Hamiltonian path problem has been used to prove the NP-hardness of the Bandpass problem.
But the Bandpass problem does not readily reduce to the maximum traveling salesman problem (Max-TSP) \cite{GJ79} for approximation algorithm design.
The main reason is that, an instance graph of Max-TSP is {\em static}, in that all (non-negative) edge weights are given at the front,
while in the Bandpass problem the number of bandpasses extracted between two consecutive rows in a row permutation is permutation dependent.
Nevertheless, as shown in the sequel, our design idea is based on maximum weight $b$-matchings for $b = 1, 2$, and $4$,
similarly as in approximating Max-TSP~\cite{Ser84,HR00,COW05,PMM09}.
Formally, in Max-TSP, a complete edge-weighted graph is given, where the edge weights are non-negative integers,
and the goal is to compute a Hamiltonian cycle with the maximum weight.
Note that there are several variants of Max-TSP been studied in the literature.
In our case, the input graph is undirected (or symmetric) and the edge weights do not necessarily satisfy the triangle inequality.
The following Lemma~\ref{lemma1} states the currently best approximation result for Max-TSP.

\begin{lemma}
\label{lemma1}{\rm \cite{PMM09}}
The Max-TSP admits an $O(n^3)$-time $\frac 97$-approximation algorithm, where $n$ is the number of vertices in the graph.
\end{lemma}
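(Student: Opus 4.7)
The plan is to upper-bound $\mathrm{OPT}$ by two tractable quantities and then extract from them a Hamiltonian cycle of weight at least $\frac{7}{9}\mathrm{OPT}$. First, any Hamiltonian cycle is itself a $2$-regular spanning subgraph, so a maximum weight cycle cover $C^*$ (a maximum weight symmetric $2$-matching) satisfies $w(C^*) \ge \mathrm{OPT}$. Second, any Hamiltonian cycle decomposes into two edge-disjoint matchings (dropping one edge if $n$ is odd), the heavier of which has weight at least $\mathrm{OPT}/2$, so a maximum weight matching $M^*$ satisfies $w(M^*) \ge \mathrm{OPT}/2$. Both objects are computable in $O(n^3)$ time: $C^*$ by reducing symmetric $2$-matching to perfect matching in a bipartite double cover, and $M^*$ by Edmonds' blossom algorithm.

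A first attempt is to break each cycle of $C^*$ by discarding its lightest edge, which yields a path cover of weight at least $\frac{2}{3} w(C^*)$ (the tight case being triangles), and then concatenate the paths arbitrarily into a Hamiltonian cycle. This already yields a $\frac{3}{2}$-approximation but is not sharp enough. To push the ratio down to $\frac{9}{7}$, I would instead compute a maximum weight \emph{triangle-free} $2$-matching $C$, in which every cycle has length at least $4$. Removing the lightest edge from a cycle of length $k \ge 4$ loses at most a $1/k$ fraction of its weight, so the resulting path cover $P$ satisfies $w(P) \ge \frac{3}{4} w(C) \ge \frac{3}{4} \mathrm{OPT}$.

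Next, I would combine $P$ with $M^*$. The graph $P \cup M^*$ has maximum degree at most $3$, and by a careful edge coloring of its connected components one can partition its edges into a small collection of path-systems. I would then consider several candidate Hamiltonian cycles obtained by patching these path-systems together, together with the cycle derived directly from $P$, and output the heaviest candidate. A weighted averaging argument across these candidates, using both $w(P) \ge \frac{3}{4}\mathrm{OPT}$ and $w(M^*) \ge \frac{1}{2}\mathrm{OPT}$, should yield a Hamiltonian cycle of weight at least $\frac{7}{9}\mathrm{OPT}$.

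The main obstacle is twofold. Combinatorially, computing a maximum weight triangle-free $2$-matching in $O(n^3)$ time is nontrivial because the triangle-freeness constraint falls outside standard matroid intersection; it requires Hartvigsen-style augmentation along alternating structures adapted to handle forbidden triangles. Analytically, the step that loses the most slack is the patching phase, and making the constants match exactly $\frac{7}{9}$ requires a tight case analysis of how the edges of $M^*$ interact with the odd components of $P$, typically by showing that the extremal loss configurations are ruled out precisely by the triangle-free condition already imposed on $C$. Once these two ingredients are in place, the ratio $\frac{9}{7}$ and the $O(n^3)$ running time follow.
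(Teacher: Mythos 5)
This lemma is not proved in the paper at all: it is quoted verbatim from Paluch, Mucha, and M\k{a}dry \cite{PMM09}, so there is no internal argument to compare yours against. Judged on its own terms, your sketch has two genuine gaps. First, it hinges on computing a \emph{maximum weight triangle-free $2$-matching} in $O(n^3)$ time. That is not an available primitive: Hartvigsen-style augmentation solves only the unweighted (cardinality) version of the $C_3$-free $2$-matching problem, and the weighted version in general graphs has been a long-standing open problem, settled only in special cases (e.g., subcubic graphs) and certainly not by an off-the-shelf $O(n^3)$ routine. So the object $C$ with $w(C)\ge \mathrm{OPT}$ and girth at least $4$, on which your whole analysis rests, cannot be produced by the means you describe. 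Second, even granting $w(P)\ge\frac34\,\mathrm{OPT}$ and $w(M^*)\ge\frac12\,\mathrm{OPT}$, the final step --- ``a weighted averaging argument across these candidates \dots should yield $\frac79\,\mathrm{OPT}$'' --- is exactly where the theorem lives, and it is left entirely unproved. Since $\frac79>\frac34$, the path cover alone does not suffice; you would have to exhibit the concrete candidate tours, account for which edges of $P$ and $M^*$ each one retains, and verify the extremal configurations. No linear combination of the two inequalities you have established implies the bound by itself.

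For contrast, the actual argument of \cite{PMM09} starts from an ordinary maximum weight cycle cover (obtained by a standard reduction to matching) together with a maximum weight matching, and handles short cycles --- triangles being the critical case --- by a careful combinatorial exchange and charging scheme rather than by imposing triangle-freeness on the $2$-matching itself; this is precisely what keeps the algorithm within known polynomial-time primitives. With only the tools you actually have in hand (maximum weight cycle cover plus maximum weight matching and naive patching), what comes out is Serdyukov's classical $\frac43$-approximation, not $\frac97$; closing that gap is the technical content of \cite{PMM09} and cannot be waved through. For the purposes of this paper, the correct move is simply to cite the result, as the authors do.
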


In our Bandpass problem, since we can always add a row of all $0$'s if needed,
we assume without loss of generality that the number of rows, $n$, is even.
A {\em $b$-matching} of a graph is a subgraph in which the degree of each vertex is at most $b$.
A maximum weight $b$-matching of an edge weighted graph can be computed in $O(n^2m)$ time~\cite{Gab83,Ans87,MP95},
where $n$ is the number of vertices and $m$ is the number of edges in the graph.
Note that a $2$-matching is a collection of vertex-disjoint cycles and paths.
A $2$-matching is {\em acyclic} if it does not contain any cycle ({\it i.e.}, it is a collection of vertex-disjoint paths).

Given the input binary matrix $A_{n \times m}$, let $r_i$ denote the $i$-th row.
We first construct a graph $G$ of which the vertex set is exactly the row set $\{r_1, r_2, \ldots, r_n\}$.
Between rows $r_i$ and $r_j$, the {\em static} edge weight is defined as the maximum number of bandpasses that can be formed between the two rows,
and is denoted as $w(i, j)$.
In the sequel we use row (of the matrix) and vertex (of the graph) interchangeably.

For a row permutation $\pi = (\pi_1, \pi_2, \ldots, \pi_n)$, its $i$-th row is the $\pi_i$-th row in the input matrix.
We call a maximal segment of consecutive $1$'s in a column of $\pi$ a {\em strip} of $\pi$.
The length of a strip is defined to be the number of $1$'s therein.
A length-$\ell$ strip contributes exactly $\lfloor \frac {\ell}2 \rfloor$ bandpasses to the permutation $\pi$.
We use $S_\ell(\pi)$ to denote the set of all length-$\ell$ strips of $\pi$, and $s_\ell(\pi) = |S_\ell(\pi)|$.
Let $b(\pi)$ denote the number of bandpasses extracted from the permutation $\pi$.
We have
\begin{equation}
\label{eq1}
b(\pi) = \sum_{\ell = 2}^n s_\ell(\pi) \left\lfloor \frac {\ell}2 \right\rfloor
= s_2(\pi) + \sum_{\ell = 3}^n s_\ell(\pi) \left\lfloor \frac {\ell}2 \right\rfloor.
\end{equation}
Let $p(\pi)$ denote the number of pairs of consecutive $1$'s in the permutation $\pi$.
A length-$\ell$ strip contributes exactly $\ell - 1$ pairs to the permutation $\pi$.
We have
\begin{equation}
\label{eq2}
p(\pi) = \sum_{\ell = 2}^n s_\ell(\pi) (\ell - 1) = s_2(\pi) + \sum_{\ell = 3}^n s_\ell(\pi) (\ell - 1).
\end{equation}

\subsection{Algorithm description}
%------------------------------------------------------------------------------
In our algorithm denoted as {\sc Approx}, the first step is to compute a maximum weight matching $M_1$ in graph $G$.
Recall that there are an even number of rows.
Therefore, $M_1$ is a perfect matching (even though some edge weights could be $0$).
Let $w(M_1)$ denote the sum of its edge weights, indicating that exactly $w(M_1)$ bandpasses can be extracted from the row pairings suggested by $M_1$.
These bandpasses are called the bandpasses of $M_1$.

Next, every $1$ involved in a bandpass of $M_1$ is changed to $0$.
Let the resultant matrix be denoted as $A'_{n \times m}$,
the resultant edge weight between rows $r_i$ and $r_j$ be $w'(i, j)$
--- which is the maximum number of bandpasses can be formed between the two revised rows,
and the corresponding residual graph be denoted as $G'$.
One can see that if an edge $(r_i, r_j)$ belongs to $M_1$, then the new edge weight $w'(i, j) = 0$.
In the second step of {\sc Approx}, we compute a maximum weight $4$-matching ${\cal C}$ in graph $G'$,
which is further decomposed in $O(n^{2.5})$ time into two $2$-matchings denoted as ${\cal C}_1$ and ${\cal C}_2$~\cite{Har69,Die05}.
Let $w'({\cal C})$ denote the weight (the number of bandpasses) of ${\cal C}$ in the residual graph $G'$.
It is noted that no bandpass of ${\cal C}$ shares a $1$ with any bandpass of $M_1$.
Using $M_1$ and ${\cal C}_1$ and ${\cal C}_2$, by Lemma~\ref{lemma41},
we can compute a matching $M_2$ from ${\cal C}$ of weight at least $\frac 1{7.5} w'({\cal C})$ such that $G[M_1 \cup M_2]$ is guaranteed acyclic.

In the third step, we use the $\frac 97$-approximation algorithm described in~\cite{PMM09} to compute a Hamiltonian path ${\cal P}$ in $G'$
whose weight is at least $\frac 79$ of the maximum weight of a Hamiltonian path.
Then, using $M_1$ and ${\cal P}$, by Lemma~\ref{lem:mine},
we can compute another matching $M_2$ from ${\cal P}$ of weight at least $\frac 13 w'({\cal P})$ such that $G[M_1 \cup M_2]$ is guaranteed acyclic.

In the last step, we choose the larger one between the two $M_2$'s found in the last two steps,
and arbitrarily stack the paths in $G[M_1 \cup M_2]$ to give a row permutation $\pi$.
Note that the number of bandpasses extracted from $\pi$, $b(\pi)$, is greater than or equal to $w(M_1) + w'(M_2)$.

\subsection{Performance analysis}
%------------------------------------------------------------------------------
Let $\pi^*$ denote the optimal row permutation such that its $b(\pi^*)$ is maximized over all row permutations.
Correspondingly, $S_2(\pi^*)$ denotes the set of length-$2$ strips in $\pi^*$, which contributes exactly $s_2(\pi^*)$ bandpasses towards $b(\pi^*)$.
The key part in the performance analysis for algorithm {\sc Approx} is to estimate $w'(M_2)$, as done in the following.

First, we partition the bandpasses of $S_2(\pi^*)$ into four groups: $B_1$, $B_2$, $B_3$, $B_4$.
Note that bandpasses of $S_2(\pi^*)$ do not share any $1$ each other.
$B_1$ consists of the bandpasses of $S_2(\pi^*)$ that also belong to matching $M_1$ (such as the one between rows $r_a$ and $r_b$ in Figure~\ref{s2});
$B_2$ consists of the bandpasses of $S_2(\pi^*)$ such that they are uniquely paired up to contribute a $1$ each to form a bandpass of $M_1$
(the other $1$ in each bandpass of $B_2$ is thus not shared by any other bandpass of $M_1$);
$B_3$ consists of the bandpasses of $S_2(\pi^*)$, each of which shares a $1$ with at least one bandpass of $M_1$,
and if it shares a $1$ with only one bandpass of $M_1$ then the other $1$ of this bandpass of $M_1$ is not shared by any other bandpass of $S_2(\pi^*)$;
$B_4$ consists of the remaining bandpasses of $S_2(\pi^*)$.
Figure~\ref{s2} illustrates some examples of these bandpasses, where bandpasses of $S_2(\pi^*)$ are in ovals and bandpasses of $M_1$ are in boxes.
\begin{figure}[htb]
\begin{center}
\unitlength=0.4mm
\begin{picture}(220, 130)
\put(50, 10){$\vdots$}
\put(0, 30){$a:$}
\put(20, 35){\line(1, 0){200}}
\put(100, 30){$1$}
\put(0, 40){$b:$}
\put(20, 45){\line(1, 0){200}}
\put(100, 40){$1$}
\put(98, 28){\framebox(6, 22){}}
\put(101, 39){\oval(11, 18){}}
\put(111, 30){$B_1$}
\put(50, 50){$\vdots$}
\put(0, 60){$t:$}
\put(20, 65){\line(1, 0){200}}
\put(100, 60){$0$}
\put(0, 70){$i:$}
\put(20, 75){\line(1, 0){200}}
\put(100, 70){$1$}
\put(0, 80){$j:$}
\put(20, 85){\line(1, 0){200}}
\put(100, 80){$1$}
\put(98, 78){\framebox(6, 22){}}
\put(101, 79){\oval(11, 18){}}
\put(0, 90){$k:$}
\put(20, 95){\line(1, 0){200}}
\put(100, 90){$1$}
\put(0, 100){$\ell:$}
\put(20, 105){\line(1, 0){200}}
\put(100, 100){$1$}
\put(101, 99){\oval(11, 18){}}
\put(0, 110){$u:$}
\put(20, 115){\line(1, 0){200}}
\put(100, 110){$0$}
\put(130, 110){$0$}
\put(30, 100){$1$}
\put(30, 110){$1$}
\put(28, 98){\framebox(6, 22){}}
\put(160, 60){$1$}
\put(160, 70){$1$}
\put(158, 58){\framebox(6, 22){}}
\put(141, 97){$B_3$}
\put(130, 70){$0$}
\put(130, 80){$1$}
\put(128, 78){\framebox(6, 22){}}
\put(130, 90){$1$}
\put(130, 100){$1$}
\put(131, 99){\oval(11, 18){}}
\put(81, 80){$B_2$}
\put(180, 110){$0$}
\put(180, 80){$0$}
\put(180, 100){$1$}
\put(180, 90){$1$}
\put(182, 98){\oval(11, 18){}}
\put(191, 97){$B_4$}
\put(50, 120){$\vdots$}
\end{picture}
\end{center}
\caption{An illustration of the bandpasses of $S_2(\pi^*)$ (in ovals) and the bandpasses of $M_1$ (in boxes) for grouping purpose.
	A horizontal line in the figure represents a row, led by its index.
    Rows that are adjacent in $\pi^*$ and/or row pairs of $M_1$ are intentionally ordered adjacently.
	In this figure, rows $r_a$ and $r_b$ are adjacent in $\pi^*$, denoted as $(r_a, r_b) \in \pi^*$, and edge $(r_a, r_b) \in M_1$ as well;
	the bandpasses between these two rows in $S_2(\pi^*)$ thus belong to $B_1$.
	Edges $(r_t, r_i), (r_j, r_k), (r_\ell, r_u) \in M_1$, while $(r_i, r_j), (r_k, r_\ell) \in \pi^*$;
	the bandpasses between rows $r_i$ and $r_j$ and between rows $r_k$ and $r_\ell$ in $S_2(\pi^*)$ shown in the figure
	have their group memberships indicated beside them respectively.\label{s2}}
\end{figure}
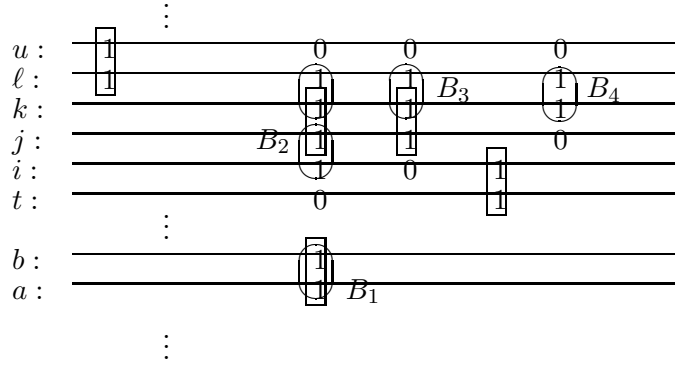

By the definition of partition, we have
\begin{equation}
\label{eq3}
s_2(\pi^*) = |B_1| + |B_2| + |B_3| + |B_4|.
\end{equation}
From these ``group'' definitions, we know all bandpasses of $B_1$ are in $M_1$.
Also, one pair of bandpasses of $B_2$ correspond to a distinct bandpass of $M_1$.
Bandpasses of $B_3$ can be further partitioned into subgroups such that
a subgroup of bandpasses together with a distinct maximal subset of bandpasses of $M_1$ form into an alternating cycle or path of length at least $2$.
Moreover,
1) when the path length is even, the number of bandpasses of this subgroup of $B_3$ is equal to the number of bandpasses of this subset of bandpasses of $M_1$;
2) when the path length is odd,
2a) either the number of bandpasses of this subgroup of $B_3$ is $1$ greater than the number of bandpasses of this subset of bandpasses of $M_1$,
2b) or the path length has to be at least $5$ and
so the number of bandpasses of this subgroup of $B_3$ is at least $\frac 23$ of the number of bandpasses of this subset of bandpasses of $M_1$.
It follows from 1), 2a) and 2b) that with respect to $B_3$, $M_1$ contains at least $\frac 23 |B_3|$ corresponding bandpasses.
That is,
\begin{equation}
\label{eq4}
w(M_1) \ge |B_1| + \frac 12 |B_2| + \frac 23 |B_3|.
\end{equation}
Clearly, all bandpasses of $B_4$ are in the residual graph $G'$,
while none of $B_1 \cup B_2 \cup B_3$ is in graph $G'$ since each one of them shares a $1$ with at least a bandpass of $M_1$.

Note that the bandpasses of $B_2$ are paired up such that each pair of the two bandpasses share a $1$ with a bandpass of $M_1$.
Assume without loss of generality that these two bandpasses of $B_2$ are formed between rows $r_i$ and $r_j$ and between rows $r_k$ and $r_\ell$,
respectively, and that the involved bandpass of $M_1$ is formed between rows $r_j$ and $r_k$ (see Figure~\ref{s2}).
That is, in the optimal row permutation $\pi^*$, rows $r_i$ and $r_j$ are adjacent, and rows $r_k$ and $r_\ell$ are adjacent;
while edge $(r_j, r_k) \in M_1$.
We remark that these four rows are distinct, and conclude that edge $(r_i, r_\ell) \notin M_1$.
The proof is simple as otherwise in the particular column a bandpass would be formed between rows $r_i$ and $r_\ell$,
making the two bandpasses of $B_2$ lose their group memberships ({\it i.e.}, they would belong to $B_3$).

\begin{lemma}
\label{lemma2}
Assume edge $(r_j, r_k) \in M_1$, and that one bandpass of $(r_j, r_k)$ shares $1$ with two bandpasses of $B_2$.
Then in graph $G$ edge $(r_j, r_k)$ is adjacent to at most four edges in the optimal row permutation $\pi^*$,
at most two of which are incident at row $r_j$,
and at most two of which are incident at row $r_k$.
\end{lemma}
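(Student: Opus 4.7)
The plan is to reduce the claim to the elementary fact that the row permutation $\pi^*$, regarded as its set of $n-1$ consecutive-row edges, is a Hamiltonian path in $G$, so every vertex of $G$ is incident to at most two edges of $\pi^*$. Applied separately to the two endpoints of $(r_j, r_k)$, this alone caps the number of edges of $\pi^*$ adjacent to $(r_j, r_k)$ at $2+2=4$, with at most two at $r_j$ and at most two at $r_k$. The one preliminary point to settle is that $(r_j, r_k)$ itself does not lie in $\pi^*$; otherwise the edge $(r_j, r_k)$ would be charged against the degree budget of both $r_j$ and $r_k$, and the clean split ``at most two at each endpoint'' would require a separate verification.

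To rule out $(r_j, r_k) \in \pi^*$, I would argue by contradiction. If $r_j$ and $r_k$ were consecutive in $\pi^*$, then, together with the given adjacencies $(r_i, r_j), (r_k, r_\ell) \in \pi^*$, the rows $r_i, r_j, r_k, r_\ell$ would have to appear in exactly this order in $\pi^*$. Let $c$ be the column realizing the specified bandpass of $(r_j, r_k) \in M_1$; the hypothesis that this bandpass shares a $1$ with each of the two $B_2$-bandpasses forces $a_{ic} = a_{jc} = a_{kc} = a_{\ell c} = 1$. But then in column $c$ of $\pi^*$ the four rows $r_i, r_j, r_k, r_\ell$ form a strip of length at least $4$, which contradicts the defining requirement that the bandpasses between $r_i, r_j$ and between $r_k, r_\ell$ are length-$2$ strips of $\pi^*$, i.e., members of $S_2(\pi^*)$.

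Once $(r_j, r_k) \notin \pi^*$ is established, the degree-$\le 2$ property of $\pi^*$ applied at $r_j$ and at $r_k$ delivers the conclusion immediately: $(r_i, r_j)$ accounts for one of the at most two edges of $\pi^*$ at $r_j$ and $(r_k, r_\ell)$ for one of the at most two at $r_k$, each endpoint possibly contributing one additional edge of $\pi^*$, and none of these edges is $(r_j, r_k)$ itself. The single substantive step in the argument is thus the exclusion of $(r_j, r_k)$ from $\pi^*$; it uses nothing beyond the length-$2$ status of $B_2$-bandpasses, and the rest is a direct application of the Hamiltonian-path degree bound.
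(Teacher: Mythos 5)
Your proof is correct and takes essentially the same route as the paper's: the paper's one-line proof just invokes ``the above discussion'' together with the fact that $(r_j, r_k)$ does not belong to the Hamiltonian path $\pi^*$, and then applies the degree-$\le 2$ bound at each endpoint. You additionally supply the justification for $(r_j, r_k) \notin \pi^*$ that the paper leaves implicit --- namely that otherwise $r_i, r_j, r_k, r_\ell$ would be consecutive and all carry a $1$ in the shared column, yielding a strip of length at least $4$ and contradicting the length-$2$ status of the $B_2$ bandpasses --- which is exactly the right argument.
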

\begin{proof}
The lemma is straightforward from the above discussion, and the fact that edge $(r_j, r_k)$ does not belong to the Hamiltonian path $\pi^*$.
\end{proof}

Continuing with the above discussion, assuming that edge $(r_j, r_k) \in M_1$,
and that one bandpass of $(r_j, r_k)$ shares $1$ with two bandpasses of $B_2$,
which are formed between rows $r_i$ and $r_j$ and between rows $r_k$ and $r_\ell$, respectively (see Figure~\ref{s2}).
We know that in residual graph $G'$, between rows $r_i$ and $r_\ell$, in the same column there is a bandpass
(which contributes $1$ towards the edge weight $w'(i, \ell)$).
We call bandpasses identified in this way the {\em induced} bandpasses.
From Lemma~\ref{lemma2}, edge $(r_j, r_k)$ is adjacent to at most two edges of $\pi^*$ incident at row $r_j$.
It follows that in residual graph $G'$, row $r_\ell$ can form induced bandpasses with at most four other rows.
In other words, the subgraph of $G'$ induced by the edges containing induced bandpasses, denoted as $G'_s$, is a $4$-matching in $G'$.

\begin{lemma}
\label{lemma3}
$G'_s$ is a $4$-matching in graph $G$, and its weight $w'(G'_s) \ge \frac 12 |B_2|$.
\end{lemma}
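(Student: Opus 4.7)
The plan is to verify both assertions by directly unpacking the induced-bandpass construction already described just before the lemma: each pair of $B_2$-bandpasses sits in a single column $c$ with outer rows $r_i, r_\ell$ and inner rows $r_j, r_k$, giving the chain $r_\ell - r_k - r_j - r_i$ with $(r_\ell, r_k), (r_j, r_i) \in \pi^*$ and $(r_j, r_k) \in M_1$.

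For the weight bound, I would first observe that in the column $c$ generating such a pair, the $1$'s at the outer rows $r_i$ and $r_\ell$ are, by the defining property of $B_2$, not shared with any bandpass of $M_1$ and therefore survive unchanged into the residual matrix $A'$, while the $1$'s at the inner rows $r_j, r_k$ are zeroed out because they belong to the shared $M_1$-bandpass. The two surviving $1$'s then form a length-$2$ strip that contributes exactly $1$ to the weight $w'(i, \ell)$ in $G'$, and this edge is by definition in $G'_s$. Distinct $B_2$-pairs yield distinct induced bandpasses, because the triple $(c, r_i, r_\ell)$ pins down $r_k$ as the unique $\pi^*$-neighbor of $r_\ell$ carrying a $1$ in column $c$ (otherwise a longer strip would form in $\pi^*$ and the bandpass would not lie in $S_2(\pi^*)$), and then $r_j$ as $r_k$'s unique $M_1$-mate. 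Summing the contributions across all $\frac{1}{2}|B_2|$ pairs yields $w'(G'_s) \ge \frac{1}{2}|B_2|$.

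For the 4-matching assertion, I would fix an arbitrary row $r_\ell$ and count the distinct rows $r_i$ it can be joined to in $G'_s$. Every such $r_i$ must arise from a chain of the above form. Since $\pi^*$ is a Hamiltonian path, $r_\ell$ has at most $2$ $\pi^*$-neighbors $r_k$; since $M_1$ is a matching, each such $r_k$ has at most one $M_1$-mate $r_j$; and by Lemma~\ref{lemma2}, each $r_j$ contributes at most $2$ candidates for $r_i$ among its $\pi^*$-neighbors. Multiplying bounds the degree of $r_\ell$ in $G'_s$ by $2 \cdot 1 \cdot 2 = 4$. The only step requiring genuine care is the distinctness argument supporting the weight bound; the remainder is routine bookkeeping.
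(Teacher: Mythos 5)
Your proposal is correct and follows essentially the same route as the paper: the weight bound comes from observing that each pair of $B_2$-bandpasses leaves its two outer $1$'s intact in the residual matrix, producing one induced bandpass per pair, and the $4$-matching bound comes from the same $2 \cdot 1 \cdot 2$ degree count through the chain $r_\ell - r_k - r_j - r_i$ (two $\pi^*$-neighbors, one $M_1$-mate, two $\pi^*$-neighbors) that the paper uses in the discussion preceding the lemma. Your write-up is actually more explicit than the paper's, which defers most of these details to that preceding discussion and to Lemma~\ref{lemma2}.
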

\begin{proof}
From the above discussion, $G'_s$ is a $4$-matching in residual graph $G'$.
Since the edge set of $G'$ and the edge set of $G$ are the same, disregarding edge weights, $G'_s$ is a $4$-matching of graph $G$.
The second half of the lemma can be simply argued as follows.
Since every pair of bandpasses of $B_2$ leads to an induced bandpass, all the edge weights in $G'_s$ sum up to at least $\frac 12 |B_2|$,
which is the number of bandpass pairs in $B_2$.
%
%
%To see that the edge set of $G'_s$ can be partitioned into two $2$-matchings,
%we assume without loss of generality that the rows on the optimal row permutation $\pi^*$ appear as $\langle 1, 2, \ldots, n \rangle$ in order.
%Further assume that edge $(j, k) \in M_1$ and that one bandpass of $(j, k)$ belongs to $B_2$.
%It follows that the corresponding edges of $G'_s$ associated with edge $(j, k) \in M_1$ are $(j-1, k-1), (j+1, k+1), (j-1, k+1)$, and $(j+1, k-1)$
%(where row $0$ and row $n$ refer to the same row).
%Let ${\cal C}_1$ denote the set of edges of forms $(j-1, k-1)$ and $(j+1, k+1)$,
%and ${\cal C}_2$ denote the set of edges of forms $(j-1, k-1)$ and $(j+1, k+1)$.
%Then, every row $i$ is incident with at most $2$ edges of ${\cal C}_1$ (${\cal C}_2$, respectively).
%That is, both ${\cal C}_1$ and ${\cal C}_2$ are $2$-matchings.
This finishes the proof.
\end{proof}

In $O(n^{2.5})$ time, a $4$-matching such as $G'_s$ can be decomposed into two $2$-matchings~\cite{Har69,Die05},
each of which is a collection of vertex-disjoint cycles or paths.

\begin{lemma}
\label{lem:HR}
Let ${\cal C}$ be a $2$-matching of graph $G$ such that no edge of $M_1$ is also an edge of ${\cal C}$.
Then, we can partition the edge set of ${\cal C}$ into four matchings $X_0, X_1, X_2, X_3$ 
such that $G[M_1 \cup X_j]$ is an acyclic $2$-matching for all $j \in \{0, 1, 2, 3\}$.
Moreover, the partitioning takes $O(n \alpha(n))$ time, where $\alpha(\cdot)$ is the inverse Ackerman function. 
\end{lemma}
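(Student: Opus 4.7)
My plan is to pass from $G$ to the contracted multigraph $H$ obtained by shrinking each edge of $M_1$ to a single super-vertex; the edges of $H$ are then in bijection with the edges of $\mathcal{C}$. Because $\mathcal{C}$ and $M_1$ are edge-disjoint, $H$ has no self-loops, and because each $M_1$-edge has two $G$-endpoints each of $\mathcal{C}$-degree at most $2$, every super-vertex of $H$ has $H$-degree at most $4$. The key translation is that a cycle in $G[M_1\cup X]$ for $X\subseteq\mathcal{C}$ corresponds bijectively to a cycle in the multigraph $H$ all of whose edges lie in $X$ (including length-$2$ cycles formed by two parallel edges between the same pair of super-vertices). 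Hence requiring that $G[M_1\cup X_j]$ be an acyclic $2$-matching is equivalent to requiring that $X_j$, viewed as a sub-multigraph of $H$, form a forest, while requiring that $X_j$ be a matching of $G$ is equivalent to requiring that no two edges of $X_j$ incident to a common super-vertex share a $G$-endpoint.

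Given this reformulation, I would construct the four color classes greedily. Maintain four disjoint-set data structures $U_0,U_1,U_2,U_3$ over the super-vertices of $H$, each initialized to singletons. Traverse the paths and cycles of $\mathcal{C}$ in an arbitrary order, and upon processing an edge $e=uv$ of $\mathcal{C}$, corresponding to an $H$-edge between super-vertices $u^*$ and $v^*$, call a color $j$ \emph{admissible} if (i) no already-colored $\mathcal{C}$-edge of color $j$ is incident to $u$ or $v$ in $G$ and (ii) $u^*$ and $v^*$ lie in distinct sets of $U_j$. Assign $e$ any admissible color and perform the corresponding union in $U_j$.

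The crux is to verify that an admissible color always exists. Constraint (i) excludes at most two colors, because each $G$-vertex has $\mathcal{C}$-degree at most $2$, so at most one previously colored $\mathcal{C}$-edge is incident to $u$ and at most one to $v$. For constraint (ii), I would bound the set $C$ of colors $j$ for which $u^*$ and $v^*$ are already connected in $U_j$: every such $j$ witnesses both a $j$-colored $H$-edge incident to $u^*$ and a $j$-colored $H$-edge incident to $v^*$, so by the $H$-degree bound of $4$, and since $e$ itself accounts for one of those slots at each endpoint, at most three distinct colors can even occur at $u^*$. A careful case analysis, split according to how many of the pre-colored $\mathcal{C}$-edges at $u$ (respectively at $v$) sit on the $u^*$--$v^*$ path in $U_j$, shows that at least one color escapes both the matching exclusion and the cycle exclusion.

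The anticipated main obstacle is precisely closing this case analysis: the naive degree count alone yields only $|A\cup B|+|C|\le 5$ forbidden colors, whereas we need the combined exclusion to be at most $3$. Closing the gap requires exhibiting an overlap between the matching exclusion and the acyclicity exclusion, namely that an edge forbidden by (i) is itself an $H$-edge incident to $u^*$ (or $v^*$) that already sits in the corresponding color's forest and therefore also contributes a slot counted in $C$. In degenerate configurations where the overlap argument stalls, I would fall back on a local Vizing-style recoloring along a bichromatic $U_{j_1}\cup U_{j_2}$-path, which preserves both the forest property and the matching property of every color class and frees up an admissible color. The running time bound $O(n\alpha(n))$ then follows because $|\mathcal{C}|=O(n)$ and each edge triggers only a constant number of find and union operations across the four disjoint-set structures.
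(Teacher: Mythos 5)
Your contraction of the $M_1$-edges and the resulting reformulation (each $X_j$ must be a matching of $G$ and a forest in the multigraph $H$) are correct, but the proof has a genuine gap exactly where you flag it: the greedy invariant is never established. Your own count shows that constraint (i) can exclude two colors and constraint (ii) can exclude up to three, and the promised overlap between the two exclusion sets does not hold in general. Concretely, let $u^*=\{u,u'\}$ and $v^*=\{v,v'\}$; the second $\mathcal{C}$-edge at $u$ may carry color $0$ and the second $\mathcal{C}$-edge at $v$ color $1$ (excluding $0$ and $1$ by the matching constraint), while the two $\mathcal{C}$-edges at $u'$ and the two at $v'$ carry colors $2$ and $3$ and already join $u^*$ to $v^*$ in both $U_2$ and $U_3$ (excluding $2$ and $3$ by acyclicity). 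All four colors are then inadmissible, and the edges forbidden by (i) are incident to $u$ and $v$, not to $u'$ and $v'$, so they do not occupy any of the slots counted in $C$ and no overlap rescues the count. Everything therefore rests on the ``Vizing-style recoloring,'' which is only gestured at: you do not exhibit the alternating structure along which to recolor, do not verify that a swap preserves both the matching and the forest property of \emph{every} class simultaneously (a recolored edge can close a cycle elsewhere in its new class, or collide at a shared $G$-endpoint), and do not bound the cascade, which also jeopardizes the $O(n\alpha(n))$ claim.

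For comparison, the paper avoids this difficulty entirely: it invokes the Hassin--Rubinstein construction to obtain two matchings $X_0,X_1$ with $G[M_1\cup X_0]$ and $G[M_1\cup X_1]$ acyclic and with every vertex of $\mathcal{C}$ covered by $X_0\cup X_1$; the leftover edges $Y$ then automatically form a matching, and $Y$ is split into $X_2,X_3$ by marking one $Y$-edge on each cycle of $G[M_1\cup Y]$. If you want to salvage your route, the missing piece is a lemma showing that whenever all four colors are blocked, some specific earlier edge (e.g.\ the color-$0$ edge at $u$) can be moved to a color absent at its own super-vertex pair without creating a cycle, and that this terminates; as written, that lemma is the whole theorem.
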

\begin{proof}
Hassin and Rubinstein~\cite{HR00} have shown that we can compute two disjoint matchings 
$X_0$ and $X_1$ in ${\cal C}$ such that the following two conditions hold:
\begin{itemize}
\item
	Both $G[M_1\cup X_0]$ and $G[M_1\cup X_1]$ are acyclic $2$-matchings of $G$.
\item
	Each vertex of ${\cal C}$ is incident to at least one edge of $X_0 \cup X_1$. 
\end{itemize}
For convenience, let $Y$ be the set of edges in ${\cal C}$ but not in $X_0 \cup X_1$.
By the second condition, $Y$ is a matching.
Consider the graph $H = (V, M_1 \cup Y)$.
Obviously, $H$ is a collection of vertex-disjoint paths and cycles, and each cycle of $H$ contains at least two edges of $Y$.
For each cycle $C$ of $H$, we mark an arbitrary edge of $C$ that also belongs to $Y$.
Let $X_3$ be the set of marked edges, and $X_2 = Y \setminus X_3$.
Then, both $G[M_1\cup X_2]$ and $G[M_1\cup X_3]$ are acyclic $2$-matchings of $G$. 

It is not hard to see that with the famous union-find data structure~\cite{Tar75}, 
the computation of $X_0$ and $X_1$ described in \cite{HR00} can be done in $O\left(n\alpha(n)\right)$ time.
Once knowing $X_0$ and $X_1$, we can obtain $X_2$ and $X_3$ in $O(n)$ time. 
\end{proof}

In general, Lemma~\ref{lem:HR} cannot be improved by partitioning the edge set of 
${\cal C}$ into three matchings instead of four matchings. To see this, it suffices to 
consider a concrete example, where ${\cal C}$ is just a cycle of length~4 and 
$M_1$ consists of the two edges connecting nonadjacent vertices in ${\cal C}$.

Let ${\cal C}_1$ and ${\cal C}_2$ denote the two $2$-matchings constituting to the maximum weight $4$-matching ${\cal C}$ of residual graph $G'$.
Using Lemma~\ref{lem:HR} alone, ${\cal C}_1$ can be partitioned into four matchings $X_0, X_1, X_2, X_3$ and
${\cal C}_2$ can be partitioned into four matchings $Y_0, Y_1, Y_2, Y_3$,
such that $G[M_1 \cup Z_j]$ is an acyclic $2$-matching for all $Z \in \{X, Y\}$ and $j \in \{0, 1, 2, 3\}$.
The following lemma states a slightly better partition when we consider ${\cal C}_1$ and ${\cal C}_2$ simultaneously.

\begin{lemma}
\label{lemma41}
The weight of matching $M_2$ is $w'(M_2) \ge \frac 1{15} |B_2|$.
\end{lemma}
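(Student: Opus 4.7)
The plan is to reduce Lemma~\ref{lemma41} to the stronger inequality
\[
w'(M_2) \;\ge\; \frac{1}{7.5}\, w'(\mathcal{C}),
\]
from which the stated bound follows at once: because $\mathcal{C}$ is a maximum-weight $4$-matching of $G'$ and $G'_s$ is a $4$-matching with $w'(G'_s)\ge \frac{1}{2}|B_2|$ by Lemma~\ref{lemma3}, we have $w'(\mathcal{C})\ge \frac{1}{2}|B_2|$, and hence $w'(M_2)\ge \frac{1}{15}|B_2|$.

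First, I would apply Lemma~\ref{lem:HR} separately to $\mathcal{C}_1$ and to $\mathcal{C}_2$, obtaining eight matchings $X_0,X_1,X_2,X_3$ and $Y_0,Y_1,Y_2,Y_3$, each of which forms an acyclic $2$-matching together with $M_1$. Their weights add up to $w'(\mathcal{C}_1)+w'(\mathcal{C}_2)=w'(\mathcal{C})$, so pigeonholing over the eight candidates alone only guarantees a matching of weight $w'(\mathcal{C})/8 = |B_2|/16$---which falls just short of what we need.

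To push the bound from $1/8$ to $1/7.5$, I would enlarge the candidate pool by exploiting the edge-disjointness of $\mathcal{C}_1$ and $\mathcal{C}_2$. The idea is to exhibit seven additional ``merged'' matchings of the form $X_i\cup Y_j$ such that each is itself a matching in $G$ and $G[M_1\cup X_i\cup Y_j]$ is still an acyclic $2$-matching, chosen so that the resulting collection of fifteen candidates double-covers every edge of $\mathcal{C}$. The total weight of the collection would then equal $2\,w'(\mathcal{C})$, and the heaviest member would have weight at least $\frac{2}{15}w'(\mathcal{C})=\frac{1}{7.5}w'(\mathcal{C})$, completing the reduction.

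The hardest part is establishing the existence of such a double-covering collection. It requires a simultaneous refinement of the two Hassin--Rubinstein decompositions: at each vertex of $G[M_1\cup\mathcal{C}]$ whose combined $\mathcal{C}$-degree is $3$ or $4$, the $\mathcal{C}_1$- and $\mathcal{C}_2$-incidences must be assigned to compatible classes $X_i$ and $Y_j$ so that (i) no two merged edges share an endpoint and (ii) no new alternating $M_1$-cycle is created. A case analysis on the connected components of $G[M_1\cup\mathcal{C}]$, coupled with the freedom of reassignment already provided by Lemma~\ref{lem:HR}, should deliver a consistent choice of the seven pairings; once this structural step is in place, the averaging argument that converts it into the weight bound on $M_2$ is routine.
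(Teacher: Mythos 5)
Your reduction of the lemma to the inequality $w'(M_2)\ge \frac{1}{7.5}\,w'(\mathcal{C})$ is exactly right, as is the observation that averaging over the eight Hassin--Rubinstein parts only yields $\frac{1}{8}$. But the mechanism you propose to close the gap from $\frac18$ to $\frac1{7.5}$ --- seven merged candidates $X_i\cup Y_j$ forming, together with the original eight, a double cover of $\mathcal{C}$ --- is precisely the step you leave unproved, and it faces a concrete obstruction. The sets $\mathcal{C}_1$ and $\mathcal{C}_2$ are only edge-disjoint, not vertex-disjoint, so a vertex $v$ of degree $4$ in $\mathcal{C}$ can be incident to one edge of $X_i$ and one edge of $Y_j$ for \emph{every} pair $(i,j)$ you might wish to merge; then $X_i\cup Y_j$ is not a matching, and $G[M_1\cup X_i\cup Y_j]$ has a degree-$3$ vertex, so it is not even a $2$-matching, let alone acyclic. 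Since every one of the eight parts must appear in at least one merged pair for the double cover to work, a single such bad vertex configuration can defeat the whole construction, and you give no argument (and no fallback) for why a consistent choice of seven pairings always exists. As written, the proof does not go through.

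The paper obtains the factor $\frac1{7.5}$ by an entirely different device: it keeps only \emph{seven} candidates $X_0,X_1,X_2,Y_0,Y_1,Y_2,Y_3$ and shows that the discarded part $X_3$ can be made cheap. Recall that in Lemma~\ref{lem:HR} the set $X_2\cup X_3$ is a matching and $X_3$ consists of one marked edge per cycle of $G[M_1\cup X_2\cup X_3]$; if every such cycle had length at least $6$ one could mark the lightest edge and get $w'(X_3)\le\frac12 w'(X_2)$, whence the seven candidates absorb total weight $w'(\mathcal{C})$ in at most $6+1.5=7.5$ ``shares.'' The real work in the paper is an extensive case analysis that \emph{resolves} every length-$4$ cycle of $G[M_1\cup X_2\cup X_3]$ (a ``problematic pair'') by swapping edges among the $X_i$ and $Y_j$ while preserving acyclicity of the other six candidates --- this is where the edge-disjointness of $\mathcal{C}_1$ and $\mathcal{C}_2$ is actually exploited. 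If you want to salvage your double-cover idea you would need to prove the existence of the merged matchings under the degree constraints above; otherwise I suggest redirecting your effort toward the ``discard a provably light part'' strategy.
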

\begin{proof}
Let ${\cal C}_1$ and ${\cal C}_2$ denote the two $2$-matchings constituting to the maximum weight $4$-matching ${\cal C}$ of residual graph $G'$.
Based on the discussion in the last paragraph,
we firstly use Lemma~\ref{lem:HR} to partition the edge set of ${\cal C}_1$ into four matchings $X_0, X_1, X_2, X_3$ and
the edge set of ${\cal C}_2$ into four matchings $Y_0, Y_1, Y_2, Y_3$,
such that $G[M_1 \cup Z_j]$ is an acyclic $2$-matching for all $Z \in \{X, Y\}$ and $j \in \{0, 1, 2, 3\}$.

Note that by Lemma~\ref{lem:HR}, $X_2 \cup X_3$ is a matching and that $X_3$ contains the marked edges,
each of which, say $e = (u, v)$, is the lightest edge of the corresponding cycle, say $C$, formed in $G[M_1 \cup X_2 \cup X_3]$.
$C$ is an even cycle.
If $C$ contains at least $6$ edges, then $w'(X_3 \cap C) = w'(e) \le \frac 12 w'(X_2 \cap C)$.
The following process is to swap certain edges among $X_0, X_1, X_2, X_3$ and $Y_0, Y_1, Y_2, Y_3$ to guarantee property
\begin{itemize}
\item[(P)]
	that each of $G[M_1 \cup X_i]$ for $i = 0, 1$ and $G[M_1 \cup Y_j]$ for $j \in \{0, 1, 2, 3\}$ is an acyclic $2$-matching, and
	that $X_2 \cup X_3$ is a matching and $G[M_1 \cup X_2 \cup X_3]$ contains no length-$4$ cycles.
\end{itemize}

Let $C = (u, v, x, y)$ be a length-$4$ cycle in $G[M_1 \cup X_2 \cup X_3]$, and assume that $X_2 \cup X_3 = \{(u, v), (x, y)\}$.
Then, we call edges $(u, v)$ and $(x, y)$ a {\em problematic pair}.
Our swapping process is to {\em resolve} such problematic pairs.
We distinguish three cases.

In the first case, edges $(u, x) \notin {\cal C}_1$ and $(v, y) \notin {\cal C}_1$.

Assume the other edges of ${\cal C}_1$ incident at $u, v, x, y$ are $(u, 1)$, $(v, 2)$, $(x, 3)$, $(y, 4)$, respectively.
These four edges thus all belong to $G[M_1 \cup X_0]$ and $G[M_1 \cup X_1]$.
If at least three of them belong to $G[M_1 \cup X_0]$, then in $G[M_1 \cup X_1]$ three vertices among $u, v, x, y$ have degree $1$ and
thus they cannot be in the same connected component of $G[M_1 \cup X_1]$.
We can move (exactly) one of edges $(u, v)$ and $(x, y)$ to $X_1$, while maintaining property (P).

We examine next where exactly two of the four edges belong to $G[M_1 \cup X_0]$.
Assume without loss of generality that $(u, 1) \in G[M_1 \cup X_0]$.
If $(y, 4) \in G[M_1 \cup X_0]$, then the connected component in $G[M_1 \cup X_1]$ containing $u$ has only one edge $(u, y)$, which belongs to $M_1$.
Thus, if the other edge of ${\cal C}_1$ incident at vertex $1$ belongs to $X_1$, we can move edge $(u, 1)$ from $X_0$ to $X_2 \cup X_3$,
and move edge $(u, v)$ from $X_2 \cup X_3$ to $X_0$;
if the other edge of ${\cal C}_1$ incident at vertex $1$ does not belong to $X_1$ (and thus it must be in $X_2 \cup X_3$),
we can move edge $(u, 1)$ from $X_0$ to $X_1$, and move edge $(u, v)$ from $X_2 \cup X_3$ to $X_0$.
Either way, we maintain property (P) while resolving a problematic pair of $X_2 \cup X_3$.

If $(v, 2) \in G[M_1 \cup X_0]$, then vertices $u$ and $v$ have degree $1$ in $G[M_1 \cup X_1]$.
Thus, if the other edge of ${\cal C}_1$ incident at vertex $1$ does not belong to $X_1$, then vertex $1$ has degree $1$ in $G[M_1 \cup X_1]$ as well.
We conclude that vertices $u, v, 1$ cannot reside in the same connected component of $G[M_1 \cup X_1]$.
When $u$ and $v$ are not connected, we can move edge $(u, v)$ from $X_2 \cup X_3$ to $X_1$;
when $u$ and $1$ are not connected, we can move edge $(u, 1)$ from $X_0$ to $X_1$, and move edge $(x, y)$ from $X_2 \cup X_3$ to $X_0$.
Again, either way, we maintain property (P) while resolving a problematic pair of $X_2 \cup X_3$.
Symmetric scenarios can be argued in the same way for vertices $2, 3, 4$.
In the remaining scenario, the other edges of ${\cal C}_1$ incident at vertices $1, 2, 3, 4$ all belong to $X_0 \cup X_1$.
We then move edges $(u, 1), (v, 2), (x, 3), (y, 4)$ from $X_0 \cup X_1$ to $X_2 \cup X_3$,
and move edges $(u, v)$ ($(x, y)$, respectively) from $X_2 \cup X_2$ to $X_0$ ($X_3$, respectively).
Note that none of these four edges would form with any other edge into a problematic pair.

Lastly, if $(x, 3) \in G[M_1 \cup X_0]$, then vertices $u$ and $x$ have degree $1$ in $G[M_1 \cup X_1]$.
Thus, if the other edge of ${\cal C}_1$ incident at vertex $1$ belongs to $X_1$, then vertex $1$ has degree $1$ in $G[M_1 \cup X_2 \cup X_3]$.
We can move edge $(u, 1)$ from $X_0$ to $X_2 \cup X_3$, and move edge $(u, v)$ from $X_2 \cup X_3$ to $X_0$.
If the other edge of ${\cal C}_1$ incident at vertex $1$ does not belong to $X_1$, then vertex $1$ has degree $1$ in $G[M_1 \cup X_1]$ as well.
We conclude that vertices $u, x, 1$ cannot reside in the same connected component of $G[M_1 \cup X_1]$.
When $u$ and $1$ are not connected, we can move edge $(u, 1)$ from $X_0$ to $X_1$, and move edge $(u, v)$ from $X_2 \cup X_3$ to $X_0$.
Symmetric scenarios can be argued in the same way for vertices $2, 3, 4$.
In the remaining scenario, none of the other edges of ${\cal C}_1$ incident at vertices $1, 2, 3, 4$ belongs to $X_0 \cup X_1$,
and that vertices $u$ and $1$ ($v$ and $2$, $x$ and $3$, $y$ and $4$, respectively) are connected in $G[M_1 \cup X_1]$
($G[M_1 \cup X_0]$, $G[M_1 \cup X_1]$, $G[M_1 \cup X_0]$, respectively).
It follows that we may move edge $(u, 1)$ from $X_0$ to $X_1$, move edge $(y, 4)$ from $X_1$ to $X_0$,
and move edge $(u, v)$ from $X_2 \cup X_3$ to $X_0$, to resolve the problematic pair.

In the second case, edges $(u, x) \notin {\cal C}_1$ but $(v, y) \in {\cal C}_1$.

Assume the other edges of ${\cal C}_1$ incident at $u, x$ are $(u, 1)$, $(x, 3)$, respectively.
These two edges and edge $(v, y)$ all belong to $G[M_1 \cup X_0]$ and $G[M_1 \cup X_1]$.
Without loss of generality, assume $(v, y) \in X_1$;
it follows that vertices $v$ and $y$ have degree $1$ in $G[M_1 \cup X_0]$.
If one of edges $(u, 1)$ and $(x, 3)$ does not belong to $G[M_1 \cup X_0]$, say $(u, 1)$, then we can move $(u, v)$ from $X_2 \cup X_3$ to $X_0$,
while maintaining property (P).

If both edges $(u, 1)$ and $(x, 3)$ belong to $G[M_1 \cup X_0]$, then vertices $u$ and $x$ have degree $1$ in $G[M_1 \cup X_1]$.
When the other edge of ${\cal C}_1$ incident at vertex $1$ does not belong to $X_1$ (but $X_2 \cup X_3$),
we can move edge $(u, 1)$ from $X_0$ to $X_1$, and move edge $(u, v)$ from $X_2 \cup X_3$ to $X_0$;
the symmetric scenario can be argued in the same way for vertex $3$;
When the other edge of ${\cal C}_1$ incident at vertex $1$ and the other edge of ${\cal C}_1$ incident at vertex $3$ both belong to $X_1$,
we can move edges $(u, 1)$ and $(v, 3)$ from $X_0$ to $X_2 \cup X_3$,
move edge $(v, y)$ from $X_1$ to $X_2 \cup X_3$,
move edge $(u, v)$ from $X_2 \cup X_3$ to $X_0$,
and move edge $(x, y)$ from $X_2 \cup X_3$ to $X_1$.
Note that none of these three edges $(u, 1)$, $(v, 3)$ and $(v, y)$ would form with any other edge into a problematic pair.

In the last case, edges $(u, x) \in {\cal C}_1$ and $(v, y) \in {\cal C}_1$.

Assume without loss of generality that $(u, x) \in X_0$ and $(v, y) \in X_1$.
Since ${\cal C}_2$ do not share any edge with ${\cal C}_1$, we consider the degrees of vertices $u, v, x, y$ in $G[M_1 \cup Y_i]$ for $i = 0, 1, 2, 3$.
If in one of these four acyclic $2$-matchings, say $G[M_1 \cup Y_0]$, at least three of the four vertices have degree $1$, say $u, v, x$,
then we can move edge $(u, v)$ from ${\cal C}_1$ to $Y_0$, and thus the problematic pair of $X_2 \cup X_3$ is resolved.
In the other cases, in each $G[M_1 \cup Y_i]$ for $i = 0, 1, 2, 3$, exactly two of the four vertices have degree $1$.

Let the two edges of ${\cal C}_2$ incident at $u$ ($v, x, y$, respectively) be $(u, 1)$ and $(u, 1')$ ($(v, 2)$ and $(v, 2')$,
$(x, 3)$ and $(x, 3')$, $(y, 4)$ and $(y, 4')$, respectively).

If $(u, 1), (y, 4) \in Y_0$, then $u$ and $y$ both have degree $1$ in one of $G[M_1 \cup Y_i]$ for $i = 1, 2, 3$, say in $G[M_1 \cup Y_3]$.
It follows that if the other edge of ${\cal C}_2$ incident at vertex $1$ does not belong to $Y_3$,
then we can move edge $(u, 1)$ from $Y_0$ to $Y_3$, and move edge $(u, v)$ from ${\cal C}_1$ to $Y_0$ to resolve the problematic pair of $X_2 \cup X_3$;
or if the other edge of ${\cal C}_2$ incident at vertex $4$ does not belong to $Y_3$,
then we can move edge $(y, 4)$ from $Y_0$ to $Y_3$, and move edge $(x, y)$ from ${\cal C}_1$ to $Y_0$ to resolve the problematic pair of $X_2 \cup X_3$.
In the remaining scenario, the other edge of ${\cal C}_2$ incident at vertex $1$ (vertex $4$, respectively) belongs to $Y_3$.
Note that in either $G[M_1 \cup Y_1]$ or $G[M_1 \cup Y_2]$, vertex $u$ has degree $1$,
and we assume without loss of generality that vertex $u$ has degree $1$ in $G[M_1 \cup Y_1]$.
Note also that vertex $1$ has degree $1$ in $G[M_1 \cup Y_1]$.
If edge $(y, 4') \notin Y_1$, then vertex $y$ has degree $1$ as well, and thus
we can move edge $(u, 1)$ from $Y_0$ to $Y_1$, and move edge $(u, v)$ from ${\cal C}_1$ to $Y_0$ to resolve the problematic pair of $X_2 \cup X_3$;
if edge $(y, 4') \in Y_1$ but the other edge of ${\cal C}_2$ incident at vertex $4'$ does not belong to $Y_3$,
then we can move edge $(y, 4')$ from $Y_1$ to $Y_3$, move edge $(u, 1)$ from $Y_0$ to $Y_1$,
and move edge $(u, v)$ from ${\cal C}_1$ to $Y_0$ to resolve the problematic pair of $X_2 \cup X_3$.
Therefore, we only need to argue the scenario where the other edge of ${\cal C}_2$ incident at vertex $4'$ belongs to $Y_3$.
Symmetrically considering $Y_2$, we may assume without loss of generality that the other edge of ${\cal C}_2$ incident at vertex $1'$ belongs to $Y_3$.
Consequently, vertices $u, 1, 1'$ all have degree $1$ in $G[M_1 \cup Y_1]$, and thus $u$ and at least one of $1$ and $1'$ are not connected.
If $u$ and $1$ are not connected, we can move edge $(u, 1)$ from $Y_0$ to $Y_1$,
and move edge $(u, v)$ from ${\cal C}_1$ to $Y_0$ to resolve the problematic pair of $X_2 \cup X_3$;
if $u$ and $1'$ are not connected, we can move edge $(u, 1')$ from $Y_2$ to $Y_1$, move edge $(u, 1)$ from $Y_0$ to $Y_2$,
and move edge $(u, v)$ from ${\cal C}_1$ to $Y_0$ to resolve the problematic pair of $X_2 \cup X_3$.

If $(u, 1), (v, 2) \in Y_0$, then $u$ and $v$ both have degree $1$ in one of $G[M_1 \cup Y_i]$ for $i = 1, 2, 3$, say in $G[M_1 \cup Y_3]$.
The following discussion is very similar to the above paragraph, though slightly simpler.
Firstly, if $x$ and $y$ are not connected in $G[M_1 \cup Y_0]$ ($u$ and $v$ are not connected in $G[M_1 \cup Y_3]$, respectively),
then we can move edge $(x, y)$ ($(u, v)$, respectively) from ${\cal C}_1$ to $Y_0$ ($Y_3$, respectively)
to directly resolve the problematic pair of $X_2 \cup X_3$.
Secondly, if the other edge of ${\cal C}_2$ incident at vertex $1$ does not belong to $Y_3$,
then we can move edge $(u, 1)$ from $Y_0$ to $Y_3$, and move edge $(x, y)$ from ${\cal C}_1$ to $Y_0$ to resolve the problematic pair of $X_2 \cup X_3$;
or if the other edge of ${\cal C}_2$ incident at vertex $2$ does not belong to $Y_3$,
then we can move edge $(v, 2)$ from $Y_0$ to $Y_3$, and move edge $(x, y)$ from ${\cal C}_1$ to $Y_0$ to resolve the problematic pair of $X_2 \cup X_3$.
Symmetrically and without loss of generality that $(x, 3), (y, 4) \in Y_3$,
if either of the other edges of ${\cal C}_2$ incident at vertices $3$ and $4$ does not belong to $Y_3$, the problematic pair can be resolved.
In the remaining scenario, we assume that vertices $u$ and $x$ have degree $1$ in $G[M_1 \cup Y_1]$ (and $(v, 2'), (y, 4') \in Y_1$).
Note that vertices $1, 2, 3, 4$ all have degree $1$ in $G[M_1 \cup Y_1]$ too.
If $u$ and $x$ are not connected in $G[M_1 \cup Y_1]$, then we can swap edges of $X_0 \cup X_1$ and of $X_2 \cup X_3$,
and move edge $(u, x)$ from $X_2 \cup X_3$ to $Y_1$, to resolve the problematic pair of $X_2 \cup X_3$.
Otherwise, $u$ and $1$ should not be connected in $G[M_1 \cup Y_1]$,
and we can move edge $(u, 1)$ from $Y_0$ to $Y_1$, and move edge $(x, y)$ from $X_2 \cup X_3$ to $Y_0$,
to resolve the problematic pair of $X_2 \cup X_3$.

All the other pairs of edges occurring in ${\cal C}_2 \cap Y_0$ can be analogously discussed as in either of the above two paragraphs.
Repeatedly applying the above process to resolve the problematic pairs of $X_2 \cup X_3$, if any,
we achieve the Property (P) that each of $G[M_1 \cup X_i]$ for $i = 0, 1$ and $G[M_1 \cup Y_j]$ for $j \in \{0, 1, 2, 3\}$ is an acyclic $2$-matching,
and that $X_2 \cup X_3$ is a matching and $G[M_1 \cup X_2 \cup X_3]$ contains no length-$4$ cycles.
Subsequently, we let $X_3$ denote the set of marked edges, guaranteeing that $w'(X_3) \le \frac 12 w'(X_2)$.

It follows that at least one of $X_0, X_1, X_2, Y_0, Y_1, Y_2, Y_3$ has its weight greater than or equal to
\[
\frac 1{7.5} \left(w'({\cal C}_1) + w'({\cal C}_2)\right) \ge
\frac 1{7.5} \times \frac 12 |B_2| = \frac 1{15} |B_2|,
\]
where the last inequality follows from Lemma~\ref{lemma3} and the fact that $w'({\cal C}) \ge w'(G'_s)$.
\end{proof}

The next lemma says that Lemma~\ref{lem:HR} can be improved if the input $2$-matching is acyclic. 

\begin{lemma}\label{lem:mine}
Let ${\cal P}$ be an acyclic $2$-matching of $G$ such that no edge of $M_1$ is also an edge of ${\cal P}$.
Then, we can partition the edge set of ${\cal P}$ into three matchings $Y_0, Y_1, Y_2$
such that $G[M_1 \cup Y_j]$ is an acyclic $2$-matching for all $j \in \{0, 1, 2\}$.
Moreover, the partitioning takes $O(n \alpha(n))$ time. 
\end{lemma}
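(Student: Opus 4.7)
The plan is to adapt the construction behind Lemma~\ref{lem:HR} of Hassin and Rubinstein~\cite{HR00}, exploiting the fact that ${\cal P}$ is a vertex-disjoint union of paths. First, I would invoke the Hassin--Rubinstein procedure to compute matchings $X_0, X_1 \subseteq {\cal P}$ such that each of $G[M_1 \cup X_0]$ and $G[M_1 \cup X_1]$ is an acyclic $2$-matching of $G$ and every vertex of ${\cal P}$ is incident to at least one edge of $X_0 \cup X_1$. Let $Y = {\cal P} \setminus (X_0 \cup X_1)$; the coverage property guarantees that $Y$ is a matching. The idea is to tentatively take $(Y_0, Y_1, Y_2) = (X_0, X_1, Y)$ whenever $G[M_1 \cup Y]$ is acyclic, and otherwise to apply local swaps between $Y$ and $X_0 \cup X_1$ to remove any cycle in $G[M_1 \cup Y]$ while preserving the properties of $X_0$ and $X_1$.

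The key observation is that every cycle $C$ in $G[M_1 \cup Y]$ alternates between $M_1$-edges and $Y$-edges, and each of its vertices must be interior to some path of ${\cal P}$: the $Y$-edge on $C$ accounts for one of its two ${\cal P}$-edges, and by the coverage property it also has an incident $X_0$- or $X_1$-edge as the other ${\cal P}$-edge. Along $C$, I would locate a $Y$-edge $e = (u, v)$ together with an adjacent ${\cal P}$-edge $e' = (u, u') \in X_\ell$ for some $\ell \in \{0, 1\}$, and swap their roles by moving $e'$ into $Y$ and $e$ into $X_\ell$. When $v$'s second ${\cal P}$-edge happens to lie in $X_{1-\ell}$, this swap preserves that each of $X_0, X_1, Y$ is a matching, breaks the cycle $C$ in $G[M_1 \cup Y]$, and keeps both $G[M_1 \cup X_0]$ and $G[M_1 \cup X_1]$ acyclic.

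The expected main obstacle is showing that a feasible swap is always available on every cycle $C$. A case analysis on the pattern of colors of the $X$-neighbors of the $Y$-edges around $C$ should reveal that at least one edge of $C$ admits a swap without introducing a new cycle in $G[M_1 \cup X_\ell]$ or elsewhere in $G[M_1 \cup Y]$. Here the acyclicity of ${\cal P}$ is essential: it bounds the local structure around $C$ to a small set of configurations, in contrast to the more intricate situation of Lemma~\ref{lem:HR}, where ${\cal C}$ may itself contain cycles and hence force the fourth matching.

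For the running time, I would maintain three union-find data structures, one per $G[M_1 \cup Y_j]$, to support connectivity queries and unions in amortized $O(\alpha(n))$ time. Since the total number of swaps performed is bounded by the $O(n)$ edges of ${\cal P}$, the overall cost is $O(n\,\alpha(n))$, matching the bound of Lemma~\ref{lem:HR}.
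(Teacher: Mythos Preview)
Your route is different from the paper's: there, ${\cal P}$ is first concatenated into a single path (using non-$M_1$ edges), and its edges $e_1,e_2,\ldots$ are then processed sequentially; at step $j$ one tries to place $e_j$ into one of the two colors not used by $e_{j-1}$, and if both choices would close a cycle with $M_1$ one argues (via the matching structure of $M_1$) that $e_{j-1}$ can be recolored to free a slot for $e_j$. No appeal to Hassin--Rubinstein is made, and each edge is recolored at most once.

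Your plan has a genuine gap at exactly the point you flag as the ``expected main obstacle.'' The swap you describe --- moving $e=(u,v)$ from $Y$ to $X_\ell$ and $e'=(u,u')$ from $X_\ell$ to $Y$ --- needs more than the stated condition on $v$: you also need that $u'$ carries no $Y$-edge (else $Y$ ceases to be a matching), that adding $e$ to $X_\ell$ closes no cycle in $G[M_1\cup X_\ell]$, and that adding $e'$ to $Y$ creates no new cycle in $G[M_1\cup Y]$. You assert that a case analysis ``should reveal'' a good swap on every cycle, but you do not carry it out, and it is not obvious: the vertices $u'$ are unconstrained and may well be incident to $Y$-edges, and a swap may create a fresh cycle elsewhere in $G[M_1\cup Y]$, so you also owe a termination argument. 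The running-time claim inherits this difficulty: swaps delete edges from the $Y_j$'s, which union--find does not support, and you give no potential bounding the number of swaps by $O(n)$. The paper's sequential construction sidesteps both issues because only the most recently placed edge is ever moved.
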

\begin{proof}
Note that ${\cal P}$ is a collection of vertex-disjoint paths.
We claim that if ${\cal P}$ has two or more connected components,
then we can connect the connected components of ${\cal P}$ into a single path by adding edges not in $M_1$ to ${\cal P}$.
To see this claim, suppose that ${\cal P}$ has two or more connected components.
Obviously, we can connect the connected components of ${\cal P}$ into a single path by adding edges to ${\cal P}$. 
Unfortunately, some edges of $M_1$ may have been added to ${\cal P}$. 
To remove edges of $M_1$ from ${\cal P}$, we start at one endpoint of ${\cal P}$ and process the edges of ${\cal P}$ in order as follows: 
\begin{itemize}
\item
	Let $s$ and $t$ be the current endpoints of ${\cal P}$, and $(u,v)$ be the current edge we want to process.
	Without loss of generality, we may assume that the removal of $(u,v)$ from ${\cal P}$ yields a path ${\cal P}_u$ from $s$ to $u$
	and another path ${\cal P}_v$ from $v$ to $t$, and further assume that the edges of ${\cal P}_u$ have been processed.
	Note that at most one of $s = u$ and $v = t$ is possible because $n \ge 3$.
	If $(u,v) \not\in M_1$, then we proceed to process the other edge incident to $v$ than $(u,v)$.
	Otherwise, $(v,s)\not\in M_1$ or $(u,t) \not\in M_1$ because $M_1$ is a matching and at most one of $s = u$ and $v = t$ is possible.
	If $(v,s) \not\in M_1$, then we modify ${\cal P}$ by deleting edge $(u,v)$ and adding edge $(v,s)$ and
	proceed to process the other edge incident to $v$ than $(v,s)$.
	On the other hand, if $(u,t) \not\in M_1$, then we modify ${\cal P}$ by deleting edge $(u,v)$ and adding edge $(u,t)$ and
	proceed to process the other edge incident to $t$ than $(u,t)$. 
\end{itemize}

By the above claim, we may assume that ${\cal P}$ is a single path ${\cal P} = (v_1, v_2, \ldots, v_{\ell+1})$,
and denote $e_j=(v_j, v_{j+1})$ for $j = 1, 2, \ldots, \ell$.

We next detail how to partition the edge set of ${\cal P}$ into three required matchings $Y_0$, $Y_1$, and $Y_2$.
Initially, we set $Y_0 = \{e_1\}$, $Y_1 = \{e_2\}$, and $Y_2 = \{e_3\}$.
Then, for $j = 4, 5, \ldots, \ell$ (in this order), we try to find a $k \in \{0,1,2\}$ such that
$Y_k \cup\{e_j\}$ is a matching and $G[M_1 \cup Y_k \cup\{e_j\}]$ is an acyclic $2$-matching of $G$. 
To explain how to find $k$, fix an integer $j \in \{4, 5, \ldots, \ell\}$. 
Let $b$ be the integer in $\{0,1,2\}$ with $e_{j-1} \in Y_b$, and $b'$ and $b''$ be the two integers in $\{0,1,2\} \setminus \{b\}$. 
If $G[M_1 \cup Y_{b'}]$ (respectively, $G[M_1 \cup Y_{b''}]$) contains no path between $v_j$ and $v_{j+1}$,
then we can set $k = b'$ (respectively, $k = b''$) and we are done.
So, we may also assume that $G[M_1 \cup Y_{b'}]$ contains a path $P'$ between $v_j$ and $v_{j+1}$ and
$G[M_1 \cup Y_{b''}]$ contains a path $P''$ between $v_j$ and $v_{j+1}$.
See Figure~\ref{fig:3part}. 

Let $v_{i'}$ (respectively, $v_{i''}$) be the neighbor of $v_j$ in $P'$ (respectively, $P''$),
and $v_{h'}$ (respectively, $v_{h''}$) be the neighbor of $v_{j+1}$ in $P'$ (respectively, $P''$).
Then, none of edges $(v_{j-1}, v_j)$, $(v_j, v_{j+1})$, and $(v_{j+1}, v_{j+2})$ can appear in $P'$ (respectively, $P''$),
because $(v_{j-1}, v_j)\in Y_b$ and neither $(v_j, v_{j+1})$ nor $(v_{j+1}, v_{j+2})$ has been processed.
So, all of $(v_j, v_{i'})$, $(v_{j+1}, v_{h'})$, $(v_j, v_{i''})$, and $(v_{j+1}, v_{h''})$ belong to $M_1$.
Thus, $i' = i''$ and $h' = h''$ because $M_1$ is a matching.
Consequently, one edge incident to $v_{i'}$ (respectively, $v_{h'}$) in ${\cal P}$ belongs to $Y_{b'}$ and the other belongs to $Y_{b''}$.
Hence, $i' < j - 1$ and $h' < j - 1$. 

Since $e_{j-1} \in Y_b$, either $e_{j-2} \in Y_{b'}$ or $e_{j-2} \in Y_{b''}$. 
We assume that $e_{j-2} \in Y_{b'}$; the case where $e_{j-2} \in Y_{b''}$ is similar. 
Since $P''$ is a path between $v_j$ and $v_{j+1}$ in $G[M_1 \cup Y_{b''}]$, $G[M_1 \cup Y_{b''}]$ contains no path between $v_j$ and $v_{j-1}$.
Thus, $G[M_1 \cup Y_{b''}\cup\{e_{j-1}\}]$ is an acyclic $2$-matching of $G$.
Hence, we move $e_{j-1}$ from $Y_{b}$ to $Y_{b''}$. 
A crucial point is that the degree of $v_{i'}$ in $G[M_1 \cup Y_b]$ is $1$. 
This is true, because $v_{i'}$ appears in both $P'$ and $P''$ and in turn cannot be incident to an edge in $Y_{b}$.
By this crucial point and the fact that $v_{i'}$ and $v_j$ belong to the same connected component in $G[M_1 \cup Y_{b}\cup\{e_{j}\}]$,
we know that $G[M_1 \cup Y_{b}\cup\{e_{j}\}]$ is an acyclic $2$-matching of $G$.
Therefore, we can set $k = b$.

\begin{figure}[ht]
\centerline{\includegraphics[angle=0,width=0.9\textwidth]{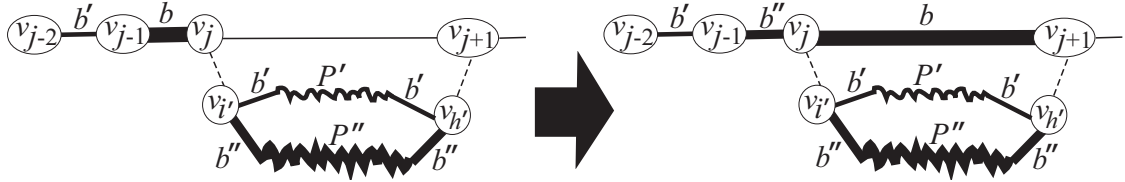}}
\caption{An illustration of moving $(v_{j-1}, v_j)$ from $Y_b$ to $Y_{b''}$ and 
adding $(v_j, v_{j+1})$ to $Y_b$, where (1)~the dashed lines indicate edges in $M_1$,  
(2)~the thin solid lines indicate edges of ${\cal P}$ that have not been processed, 
(3)~the lines labeled with $b$ (respectively, $b'$, or $b''$) indicate edges in $Y_b$ 
(respectively, $Y_{b'}$, or $Y_{b''}$), and 
(4)~the two curves may contain edges of $M_1$.} 
\label{fig:3part}
\end{figure}

Obviously, with the famous union-find data structure~\cite{Tar75},
the above partitioning of the edge set ${\cal P}$ into $Y_0, Y_1, Y_2$ can be done in $O\left(n \alpha(n)\right)$ time. 
\end{proof}

In general, Lemma~\ref{lem:mine} cannot be improved by partitioning the edge set of ${\cal P}$ into two matchings instead of three matchings.
To see this, it suffices to consider a concrete example,
where ${\cal P}$ is just a path with edges $(v_1, v_2)$, $(v_2, v_3)$, $(v_3, v_4)$ and
$M_1$ consists of edges $(v_1, v_3)$ and $(v_2, v_4)$.

\begin{lemma}
\label{lemma42}
The weight of matching $M_2$ is $w'(M_2) \ge \frac 7{27} |B_4|$.
\end{lemma}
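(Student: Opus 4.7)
The plan is to exploit three facts in sequence: (i) bandpasses of $B_4$ survive in the residual graph $G'$ and sit on edges of the optimal Hamiltonian path $\pi^*$; (ii) the Max-TSP / Hamiltonian-path approximation gives us a path ${\cal P}$ in $G'$ whose $w'$-weight compares favorably to $|B_4|$; (iii) Lemma~\ref{lem:mine} lets us extract a large matching from ${\cal P}$ that is compatible with $M_1$.

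First, I would show $w'(\pi^*) \ge |B_4|$. By the definition of the partition, each bandpass in $B_4$ is a length-$2$ strip of $\pi^*$ whose two $1$'s are not shared with any bandpass of $M_1$. Hence, the transformation from $A$ to $A'$ does not alter either of these $1$'s, and the bandpass persists in $G'$ between the same adjacent pair of rows of $\pi^*$. Since distinct bandpasses of $S_2(\pi^*)$ involve disjoint $1$'s and different columns, each bandpass of $B_4$ contributes $1$ to the weight $w'$ of the corresponding edge of $\pi^*$, giving $w'(\pi^*) \ge |B_4|$.

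Second, I would invoke the Hamiltonian-path approximation used in the algorithm: ${\cal P}$ satisfies $w'({\cal P}) \ge \frac{7}{9} w'(\pi^*_{\max})$, where $\pi^*_{\max}$ is the maximum-weight Hamiltonian path in $G'$. Since $\pi^*$ is itself a Hamiltonian path in $G'$, we obtain $w'({\cal P}) \ge \frac{7}{9} w'(\pi^*) \ge \frac{7}{9} |B_4|$. Third, since ${\cal P}$ is an acyclic $2$-matching of $G$ and no edge of $M_1$ is an edge of ${\cal P}$ (all edges of $M_1$ have weight $0$ in $G'$ whereas ${\cal P}$ can be assumed to consist only of positive-weight edges), Lemma~\ref{lem:mine} partitions the edge set of ${\cal P}$ into three matchings $Y_0, Y_1, Y_2$ such that $G[M_1 \cup Y_j]$ is an acyclic $2$-matching for each $j$. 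By averaging, the heaviest $Y_j$ has weight at least $\frac{1}{3} w'({\cal P}) \ge \frac{7}{27} |B_4|$, and since the algorithm selects $M_2$ as the better of the two candidate matchings produced in the last two steps, $w'(M_2) \ge \frac{7}{27} |B_4|$.

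There is no real obstacle: the chain of inequalities is routine once the first step is in place. The only subtlety worth spelling out explicitly in a full write-up is the per-bandpass accounting in step one (making sure that bandpasses of $B_4$ on the same edge of $\pi^*$ each contribute to $w'$ without double-counting), which is immediate from the disjointness of the $1$'s underlying distinct strips.
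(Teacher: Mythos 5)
Your proposal is correct and follows essentially the same route as the paper: observe that the bandpasses of $B_4$ survive in $G'$ and lie on the Hamiltonian path $\pi^*$, so the $\frac79$-approximate Hamiltonian path ${\cal P}$ has weight at least $\frac79|B_4|$, and then apply Lemma~\ref{lem:mine} to extract a matching of weight at least $\frac13 w'({\cal P})$. Your explicit remark on why ${\cal P}$ may be assumed to avoid the (zero-weight) edges of $M_1$ is a small but welcome addition that the paper leaves implicit.
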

\begin{proof}
Note that graph $G'$ contains all bandpasses of $B_4$, which is an acyclic $2$-matching.
From Lemma~\ref{lemma1}, we can compute a Hamiltonian path ${\cal P}$ in $G'$ of weight at least $\frac 79$ of the optimum,
and thus of weight at least $\frac 79 |B_4|$.
The above Lemma~\ref{lem:mine} guarantees that
\[
w'(M_2) \ge \frac 13 w'({\cal P}) \ge \frac 7{27} |B_4|.
\]
\end{proof}

\begin{theorem}
\label{theorem1}
Algorithm {\sc Approx} is an $O(n^4)$-time $\frac {426}{227}$-approximation for the Bandpass problem.
\end{theorem}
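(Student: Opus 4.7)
My plan is to show that the output $\pi$ satisfies $b(\pi) \ge w(M_1) + w'(M_2) \ge \frac{227}{426} b(\pi^*)$; together with the routine $O(n^4)$ bound on the running time (dominated by the maximum weight $b$-matching computations and the cubic-time Max-TSP subroutine from Lemma~\ref{lemma1}), this yields the theorem.

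Writing $a_i = |B_i|$ for $i = 1,\ldots,4$ and $z = \sum_{\ell \ge 3} s_\ell(\pi^*) \lfloor \ell/2 \rfloor$, Equations~(\ref{eq1}) and~(\ref{eq3}) give $b(\pi^*) = a_1 + a_2 + a_3 + a_4 + z$. I will collect four lower bounds on $A := w(M_1)$ and $B := w'(M_2)$. The first, $A \ge a_1 + \frac{1}{2} a_2 + \frac{2}{3} a_3$, is Eq.~(\ref{eq4}). The third and fourth, $B \ge \frac{1}{15} a_2$ and $B \ge \frac{7}{27} a_4$, are Lemmas~\ref{lemma41} and~\ref{lemma42}. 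The second, which I still need to derive, is a matching-type bound $A \ge \frac{1}{2}(a_1 + a_2 + a_3 + a_4) + \frac{3}{4} z$: split the edges of the Hamiltonian path $\pi^*$ into two matchings $E_{\mathrm{odd}}, E_{\mathrm{even}}$ by position parity. Since each consecutive $1$-pair on a $\pi^*$-edge contributes $1$ to that edge's $G$-weight, $\sum_{e \in \pi^*} w(e) = p(\pi^*)$, so $A \ge \max\{w(E_{\mathrm{odd}}), w(E_{\mathrm{even}})\} \ge \frac{1}{2} p(\pi^*)$; combined with the arithmetic inequality $\ell - 1 \ge \frac{3}{2} \lfloor \ell/2 \rfloor$ for every $\ell \ge 3$ and Eq.~(\ref{eq2}), this yields the claim.

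With these four inequalities in hand, the approximation ratio reduces to a small linear program: minimize $A + B$ subject to the four bounds and $a_1 + a_2 + a_3 + a_4 + z = 1$. My plan is to exhibit its optimal value $\frac{227}{426}$ by taking the nonnegative combination of the four bounds with multipliers $(\frac{14}{71}, \frac{57}{71}, \frac{35}{71}, \frac{36}{71})$, respectively; a direct calculation verifies that the resulting coefficients of $a_1, a_2, a_3, a_4, z$ on the right-hand side are all at least $\frac{227}{426}$, with equality for $a_2, a_3, a_4$. The left-hand side is exactly $A + B$, yielding $A + B \ge \frac{227}{426} b(\pi^*)$.

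The main obstacle is discovering these multipliers: they are the dual optimum of the LP and must simultaneously make the coefficients of $a_2, a_3, a_4$ equal $\frac{227}{426}$, corresponding to the tight primal configuration $(a_1, a_2, a_3, a_4, z) = (0, \frac{35}{71}, \frac{27}{71}, \frac{9}{71}, 0)$, while keeping those of $a_1$ and $z$ above the threshold. Once the multipliers are written down, the rest is routine arithmetic.
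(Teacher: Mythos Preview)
Your proposal is correct and follows essentially the same approach as the paper: the paper combines Eq.~(\ref{eq4}) and the matching bound $w(M_1)\ge \tfrac12 p(\pi^*)$ with weights $1-y=\tfrac{14}{71}$ and $y=\tfrac{57}{71}$, and combines Lemmas~\ref{lemma41} and~\ref{lemma42} with weights $x=\tfrac{35}{71}$ and $1-x=\tfrac{36}{71}$, then applies $(\ell-1)\ge\tfrac32\lfloor\ell/2\rfloor$ --- exactly your four multipliers, only with the last inequality deferred rather than folded into the second bound up front. Your LP-dual phrasing and the explicit tight primal point are a clean repackaging, but the substance is identical.
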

\begin{proof}
The running time of algorithm {\sc Approx} is dominated by the computing for those maximum weight $b$-matchings, for $b = 1, 2, 4$,
which can be done in $O(n^4)$ time.
Since $M_1$ is the maximum weight matching in graph $G$, from Eq.~(\ref{eq2}) we have
\begin{equation}
\label{eq5}
w(M_1) \ge \frac 12 p(\pi^*) \ge \frac 12 \left(s_2(\pi^*) + \sum_{\ell = 3}^n s_\ell(\pi^*) (\ell - 1)\right).
\end{equation}

Combining Eqs.~(\ref{eq4}) and (\ref{eq5}), we have for any real number $y \in [0, 1]$,
\begin{equation}
\label{eq6}
w(M_1) \ge y \frac 12 \left(s_2(\pi^*) + \sum_{\ell = 3}^n s_\ell(\pi^*) (\ell - 1)\right)
				 + (1-y) \left(|B_1| + \frac 12 |B_2| + \frac 23 |B_3|\right).
\end{equation}

The permutation $\pi$ produced by algorithm {\sc Approx} contains $b(\pi) \ge w(M_1) + w'(M_2)$ bandpasses, as indicated at the end of Section 2.1.
From Lemmas~\ref{lemma41} and \ref{lemma42}, we have for any real number $x \in [0, 1]$,
\begin{equation}
\label{eq7}
b(\pi) \ge w(M_1) + x \frac 1{15} |B_2| + (1-x) \frac 7{27} |B_4|.
\end{equation}

Together with Eqs.~(\ref{eq3}) and (\ref{eq6}), the above Eq.~(\ref{eq7}) becomes,
\begin{eqnarray}
\label{eq8}
b(\pi)	&\ge	&w(M_1) + x \frac 1{15} |B_2| + (1-x) \frac 7{27} |B_4|\nonumber\\
		&\ge	&y \frac 12 \left(s_2(\pi^*) + \sum_{\ell = 3}^n s_\ell(\pi^*) (\ell - 1)\right)\nonumber\\
		&		&	+ (1-y) \left(|B_1| + \frac 12 |B_2| + \frac 23 |B_3|\right)
					+ x \frac 1{15} |B_2| + (1-x) \frac 7{27} |B_4|\nonumber\\
		&=		&\frac y2 \left(s_2(\pi^*) + \sum_{\ell = 3}^n s_\ell(\pi^*) (\ell - 1)\right)\nonumber\\
		&		& + (1-y) |B_1| + \left(\frac {1-y}2 + \frac x{15}\right) |B_2| + \frac {2(1-y)}3 |B_3| + \frac {7(1-x)}{27} |B_4|\nonumber\\
		&\ge	&\frac {57}{142} \left(s_2(\pi^*) + \sum_{\ell = 3}^n s_\ell(\pi^*) (\ell - 1)\right)
				    + \frac {14}{213} |B_1| + \frac {28}{213} s_2(\pi^*),
\end{eqnarray}
where the last inequality is achieved by setting $x = \frac {35}{71}$ and $y = \frac {57}{71}$.
Note that for all $\ell \ge 3$, $(\ell - 1) \ge \frac 32 \lfloor\frac {\ell}2\rfloor$.
It then follows from Eqs.~(\ref{eq8}) and (\ref{eq1}) that
\begin{equation}
\label{eq9}
b(\pi) \ge \frac {227}{426} \left(s_2(\pi^*) + \frac {171}{227} \times \frac 32 \sum_{\ell = 3}^n s_\ell(\pi^*) \left\lfloor\frac {\ell}2\right\rfloor\right)
	\ge \frac {227}{426} b(\pi^*).
\end{equation}

That is, the worst-case performance ratio of algorithm {\sc Approx} is at most $\frac {426}{227}$.
\end{proof}

\section{Conclusions and future work}
%==============================================================================
In this paper, we presented a $\frac {426}{227}$-approximation algorithm for the Bandpass problem ($\frac {426}{227} \approx 1.8767$),
which improves the first non-trivial approximation ($\approx 1.8948$) and a subsequent approximation ($\approx 1.8805$).
Our algorithm is based on maximum weight $b$-matchings, for $b = 1, 2, 4$, similar to tackling the closely related Max-TSP.
The intrinsic structural property we proved for the optimal row permutation and the maximum weight matching is fundamental,
without which no better lower bound on the optimum can be built.
The schemes we developed to partition a $b$-matching, for $b = 2$ and $b = 4$, into desired matchings are also interesting,
and could potentially be further improved.

For the Max-TSP, Serdyukov presented a $\frac 43$-approximation algorithm based on
the maximum weight {\em assignment} (or called {\em cycle cover}) and the maximum weight matching~\cite{Ser84},
which has been improved to the currently best $\frac 97$-approximation algorithm in Lemma~\ref{lemma1}.
We believe that the Bandpass problem can be better approximated by
either improving the two key $b$-matching partitioning schemes presented in this paper,
or introducing new structural properties and/or new techniques;
yet we also believe that there will be a gap from $\frac 97$, due to the ``dynamic'' edge weights.

On the other hand, Hassin and Rubinstein gave a randomized approximation algorithm for the Max-TSP
with expected performance ratio $\frac {33}{25} ( = 1.32)$~\cite{HR00} (which was subsequently de-randomized in \cite{COW05}).
It would be interesting to design a randomized approximation for the Bandpass problem too, with a better than $1.8767$ expected performance ratio.

\section*{Acknowledgement}
%==============================================================================
Weitian Tong, Randy Goebel, and Guohui Lin are supported in part by NSERC.

%\bibliography{../../BiBTeX/general,../../BiBTeX/mypapers,../../BiBTeX/bandpass}

\end{document}